\documentclass[conference,letterpaper]{IEEEtran}
\addtolength{\topmargin}{9mm}

\usepackage[utf8]{inputenc} 
\usepackage[T1]{fontenc}
\usepackage{url}
\usepackage{ifthen}
\usepackage{cite}
\usepackage[cmex10]{amsmath} 
\usepackage{amssymb}  
\usepackage{amsthm}
\usepackage{enumerate,color,float}
\usepackage[justification=centering]{caption}
\usepackage{amsfonts}
\usepackage{mathtools}
\usepackage{arydshln}
\usepackage{stmaryrd}
\usepackage{cases}
\usepackage{multicol}
\usepackage{algorithm,algorithmic}
\usepackage{tikz,pgfplots,rotating}
\usepackage{bm}
\newtheoremstyle{remboldstyle}
  {}{}{}{}{\bfseries}{.}{.5em}{{\thmname{#1 }}{\thmnumber{#2}}}
\theoremstyle{plain}
\newtheorem{theorem}{Theorem}

\theoremstyle{definition}
\newtheorem{definition}[theorem]{Definition}

\newtheorem{example}[theorem]{Example}

\theoremstyle{remboldstyle}

\theoremstyle{corollary}

\theoremstyle{lemma}

\newcommand{\mb}[1]{\mathbf{#1}}
\newcommand{\mbb}[1]{\mathbb{#1}}

\newcommand{\boldV}[1]{\boldsymbol{\mathbf{#1}}}

\newcommand{\Rm}[1]{\mathrm{#1}}

\newcommand{\defeq}{\triangleq}

\DeclareMathOperator*{\argmin}{\arg\,\min}
\DeclareMathOperator*{\argmax}{\arg\,\max}

\DeclareMathOperator{\Wt}{wt}

\newcommand{\GF}[1]{\mbb{F}_{#1}}

\newcommand{\om}{\omega}


\newcommand{\ve}{\mathbf{e}}

\newcommand{\vt}{\mathbf{t}}
\newcommand{\vs}{\mathbf{s}}
\newcommand{\vv}{\mathbf{v}}

\newcommand{\cB}{\mathcal{B}}

\newcommand{\cC}{\mathcal{C}}
\newcommand{\cD}{\mathcal{D}}
\newcommand{\cE}{\mathcal{E}}

\newcommand{\cJ}{\mathcal{J}}
\newcommand{\cN}{\mathcal{N}}

\newcommand{\tr}{\mathsf{T}}

\newcommand{\qedblack}{\hspace*{\fill}
  $\blacksquare$\smallskip}

\newcommand{\defend}{\qedblack}

%


\newcommand{\Footnotetext}[2]{\begin{figure}[!b]\footnotesize%
  \vspace{-3ex}\hrulefill\hfill\makebox[0em]{}\hfill\makebox[0em]{}%
  \par${}^{#1}$ #2\vspace{-0.60ex}\end{figure}\addtocounter{figure}{0}}

\interdisplaylinepenalty=2500 

\hyphenation{Pseudocodeword-based Decoder of Quantum Stabilizer Codes}

\begin{document}
\title{Pseudocodeword-based Decoding \\of Quantum Stabilizer Codes}

\author{

  \IEEEauthorblockN{July X.\ Li and 
   						Pascal O.\ Vontobel}
  \IEEEauthorblockA{Department of Information Engineering,
                    The Chinese University of Hong Kong, 
                    Hong Kong\\
                    Email: \{july.x.li, 
                    		  pascal.vontobel\}@ieee.org\vspace{-0.5cm}
}
}

\maketitle


\begin{abstract}
  It has been shown that graph-cover pseudocodewords can be used to
  characterize the behavior of sum-product algorithm (SPA) decoding of
  classical codes. In this paper, we leverage and adapt these results to
  analyze SPA decoding of quantum stabilizer codes. We use the obtained
  insights to formulate modifications to the SPA that overcome some of its
  weaknesses.
\end{abstract}


\Footnotetext{}{Supported in part by RGC GRF grant 2150965.}


\section{Introduction}
\label{sec:introduction:1}

Graph covers have been shown to be a useful tool for analyzing sum-product
algorithm (SPA) decoding of classical codes~\cite{vontobel2013counting}. The
task of analyzing the behavior of SPA decoding for quantum stabilizer codes is
more challenging, especially because the degeneracy of quantum stabilizer
codes needs to be taken into account. Despite these challenges, being able to
understand and improve the behavior of the SPA is highly desirable, since it
has been observed that the performance of the SPA is far from satisfactory
when decoding quantum stabilizer codes of high degeneracy (see, e.g., the
discussion of simulation results of various massage-passing iterative decoding
algorithms and LP decoders in~\cite{babar2015fifteen, li2018lp,
  liu2018neural}).

In this paper, in a first step, we use graph-cover pseudocodewords to analyze
the behavior of SPA. In particular, we can show that the decoding ability of
the SPA is limited by the minimum distance of the normalizer label code, which
is a serious problem for quantum LDPC codes, e.g., the toric
codes~\cite{tillich2014quantum} and MacKay's bicycle
codes~\cite{mackay2004sparse}, where the minimum distance of the normalizer
label code is no larger than the row weight of its parity-check matrix due to
the self-orthogonality of the stabilizer label code.

In a second step, we use the obtained insights to formulate modifications to
the SPA that overcome some of its weaknesses. Taking advantage of the
degeneracy of the quantum stabilizer code, the performance of the decoder is
then limited by the minimum distance of the quantum stabilizer code $d$ instead of
the minimum distance of the normalizer label code $d_{\cN}$. For notational details, see Section~\ref{sec:notation:2}.

This paper is organized as follows. In Section~\ref{sec:notation:1}, we review
some basic notations including the stabilizer formalism and the standard SPA
for quantum stabilizer codes. In Section~\ref{sec:theoretical:analysis:1}, we
analyze the performance of SPA for quantum stabilizer codes and give some
other theoretical results about degenerate decoders of quantum stabilizer
codes. In Section~\ref{sec:pcw:decoding:0}, we propose, first, some methods to
improve the performance of the SPA for general quantum stabilizer codes and,
second, a pseudocodeword-based decoder for quantum cycle codes. Finally, we
show some simulation results in Section~\ref{sec:simulation:results:1}.


\section{Basics}
\label{sec:notation:1}

\subsection{Quantum Stabilizer Formalism}
\label{sec:notation:2}
We refer the readers to~\cite{nielsen2010quantum,lidar2013quantum} for a
detailed introduction to quantum stabilizer codes, some recent developments of
quantum error-correction codes, and more details of the notations. Moreover,
see~\cite{li2018lp} for the use of pseudocodewords in the context of quantum
stabilizer codes. Due to the page limitations, we only introduce the essential
notations which are used throughout the paper.

Consider an $\llbracket n,k,d\rrbracket$ quantum stabilizer code $\cC$ of
length $n$, dimension $k$, and minimum distance $d$.  The quantum stabilizer
code $\cC$ may be characterized using the equivalent binary representation of
its stabilizer, namely its binary stabilizer label code $\cB$, which is
self-orthogonal under the symplectic inner product to guarantee the
commutativity of the generators of the stabilizer. The binary representation of a
Pauli operator on $n$ qubits is a length-$2n$ binary vector $\vv = [\vv_1,
..., \vv_n] \in \left(\GF{2}^2\right)^n$, where each $\vv_i$ is obtained by
mapping $I$, $X$, $Y$, and $Z$ onto $\GF{2}^2$ as follows
\begin{equation*}
  I \mapsto [0,0], \ 
  X \mapsto [1,0], \ 
  Y \mapsto [1,1], \text{ and }
  Z \mapsto [0,1],
\end{equation*}
and the weights of them are defined to be, respectively, 
\begin{equation*}
  \Wt([0,0]) \defeq 0 \text{ and }
  \Wt([1,0]) = \Wt([1,1]) = \Wt([0,1]) \defeq 1.
\end{equation*}
In this paper, we make the following assumptions:
\begin{itemize}

\item the normalizer label code $\cN$ is the dual code of $\cB$ under the
  symplectic inner product (note that the self-orthogonality of $\cB$ implies
  that $\cB \subseteq \cN$);

\item both $\cB$ and $\cN$ are binary linear codes of length $2n$ and of
  dimension $n-k$ and $n+k$, respectively;

\item the weight of $\vv$ is $\Wt(\vv) \defeq \sum_{i}\Wt(\vv_i)$;

\item $d \defeq \min_{\vv\in \cN \backslash \cB} \Wt(\vv)$ and $t \defeq
  \left\lfloor \frac{d-1}{2} \right\rfloor$;

\item $d_{\cN} \defeq \min_{\vv\in \cN} \Wt(\vv)$ and $t_{\cN} \defeq
  \left\lfloor \frac{d_{\cN}-1}{2} \right\rfloor$.

\end{itemize}

A quantum stabilizer code $\cC$ is called a quantum cycle code if its
normalizer label code $\cN$ is a cycle code, which means that the number of
$1$'s per column of the parity-check matrix $H$ describing $\cN$ is two.  For
example, the toric codes are quantum cycle
codes (see, e.g.,~\cite{tillich2014quantum,Li:Vontobel:16:1}).

The quantum channel that we use in this paper is the quantum depolarizing
channel (QDCh). Similar to the binary symmetric channel (BSC), the action of a
QDCh with depolarizing probability $p$ is such that it acts independently on
each qubit: a qubit is either unchanged with probability $1-p$, or affected by
a unitary operator $X$, $Y$, or $Z$, each with probability $p/3$. Since we are
decoding with respect to \emph{binary} normalizer label codes, decoding is
based on approximating the QDCh by two independent BSCs with crossover
probability $2p/3$, i.e., the probability for having a bit-flip and a
phase-flip is $2p/3$ independently for each qubit.


\begin{definition}
\label{def:ND_DD}

Given a syndrome $\vs\in\GF{2}^{n-k}$, 
let $\vs \mapsto \vt(\vs)$ be the mapping giving a coset representative of the coset of $\cN$ corresponding to the syndrome $\vs$.
Note that if $\ve$ is the binary representation of the actual error, then $H \ve^{\tr} = \vs^{\tr}$
and $\ve \in \vt(\vs)+\cN$.

A non-degenerate decoder $\cD_{\Rm{ND}}$ outputs a vector based on the syndrome $\vs$;  
an error vector $\vv$ leads to a decoding error for $\cD_{\Rm{ND}}$ if $\vv \neq \cD_{\Rm{ND}}(\vv H^{\tr})$.
A degenerate decoder $\cD_{\Rm{D}}$ outputs a coset of $\cB$ based on the syndrome $\vs$; 
an error vector $\vv$ leads to a decoding error for $\cD_{\Rm{D}}$ if $\vv\notin \cD_{\Rm{D}}(\vv H^{\tr})$.
The blockwise ML (non-)degenerate decoders 
$\cD_{\Rm{ND}}^{\Rm{ML}}$, $\cD_{\Rm{D}}^{\Rm{ML}*}$, and $\cD_{\Rm{D}}^{\Rm{ML}}$ are defined to be, respectively,
\begin{align*}
&\cD_{\Rm{ND}}^{\Rm{ML}}(\vs) \defeq \argmin_{\vv \in\mb{t}(\vs) + \cN} \Wt(\vv),\ \ \ 
\cD_{\Rm{D}}^{\Rm{ML}*}(\vs) \defeq \cD_{\Rm{ND}}^{\Rm{ML}}(\vs) +\cB,\\
&\cD_{\Rm{D}}^{\Rm{ML}}(\vs) \defeq \argmax_{\boldV{\ell}+\cB:\ \boldV{\ell} \in\mb{t}(\vs) + \cN} p(\boldV{\ell}+\cB|\vs),
\end{align*}
where $p(\boldV{\ell}+\cB|\vs)$ is the probability of the coset $\boldV{\ell}+\cB$ based on the syndrome $\vs$.
\defend
\end{definition}

For the simulations in this paper, there is a decoding error if the output vector is not in the same coset of $\cB$ as the actual error or the output coset is not the same coset of $\cB$ as the coset of the actual error.

\subsection{SPA decoding, graph covers, and pseudocodewords}

SPA decoding of a quantum stabilizer code $\cC$ consists of the following
steps: 1) running the SPA on a factor graph representing a coset of the
normalizer label code $\cN$, where the coset is defined by the syndrome $\vs$
that is obtained from suitable quantum measurements; 2) outputting a vector
$\vv$, 3) finding the coset of $\cB$ containing $\vv$. (For further details,
see, e.g., \cite[Section IV]{babar2015fifteen}.) In this paper, the factor
graphs are normal factor graphs, where variables are associated with edges.

It was shown in~\cite{Yedidia:Freeman:Weiss:05:1} that fixed points of the SPA
correspond to stationary points of the Bethe free energy function. As
discussed in~\cite{vontobel2013counting}, for LDPC codes this means that the
beliefs obtained at a fixed point of the SPA induce a pseudocodeword $\boldV{\om}$. 
For example, if we consider a binary linear code, then the $i^{\Rm{th}}$ component of $\boldV{\om}$ is $\om_i \defeq b_i(1)$ assuming the belief of the $i^{\Rm{th}}$ variable is $[b_i(0), b_i(1)]$.
The paper~\cite{vontobel2013counting} also introduced the symbolwise graph-cover
decoder, a decoder that finds the pseudocodeword with minimal Bethe free
energy, or, equivalently, the pseudocodoword with the most pre-images in all
$M$-covers of the base normal factor graph (after properly discounting for a
channel-output-dependent term), when $M$ goes to infinity. For general codes,
symbolwise graph-cover decoding is an approximation of the true behavior of
SPA decoding. However, for cycle codes it was shown
in~\cite{pfister2013relevance} that SPA decoding is equivalent to symbolwise
graph-cover decoding. Note that, although symbolwise graph-cover decoding is
based on $M$-covers where $M$ goes to infinity, in many instances the study of 
pseudocodewords induced by codewords in $M$-covers for small $M$ gives already
many insights into the suboptimality of SPA decoding (see, e.g., the upcoming
Fig.~\ref{fig:PSPA_toric2_2cover} that shows an $M$-cover for $M = 2$).


\section{Theoretical Analysis}
\label{sec:theoretical:analysis:1}

In this section we characterize the performance of the non-degenerate and
degenerate decoders defined in Definition~\ref{def:ND_DD}. In particular, in
Theorems~\ref{thm:ND_decoder} and~\ref{thm:DD_decoder} we prove that the
minimum weight of errors that the non-degenerate and degenerate
decoders fail to decode are $t_{\cN}+1$ and $t+1$, respectively. Moreover, in
Theorems~\ref{thm:SPA_half_d_N} and~\ref{thm:SPA_d_N}, we show two types of
decoding errors limiting the performance of SPA decoding of quantum cycle
codes.


\begin{theorem}
  \label{thm:ND_decoder}
  The minimum weight of errors leading to decoding errors for $\cD_{\Rm{ND}}^{\Rm{ML}}$ is $t_{\cN}+1$.
\end{theorem}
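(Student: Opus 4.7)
The plan is to establish matching bounds: (i) every error of weight $\le t_{\cN}$ is decoded correctly by $\cD_{\Rm{ND}}^{\Rm{ML}}$, and (ii) there exists an error of weight exactly $t_{\cN}+1$ that leads to a decoding error. Both parts will rest on the triangle inequality for the Hamming weight together with the definitions $d_{\cN} = \min_{\vv\in\cN}\Wt(\vv)$ and $t_{\cN} = \lfloor(d_{\cN}-1)/2\rfloor$.

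For (i), suppose $\Wt(\ve) \le t_{\cN}$, put $\vs = \ve H^{\tr}$, and let $\ve'$ be any other vector in the coset $\vt(\vs)+\cN = \ve+\cN$. Then $\ve'-\ve \in \cN\setminus\{\vect{0}\}$, so $\Wt(\ve'-\ve) \ge d_{\cN}$, and by the triangle inequality $\Wt(\ve') \ge d_{\cN} - \Wt(\ve) \ge d_{\cN} - t_{\cN}$. Since $2t_{\cN} \le d_{\cN}-1 < d_{\cN}$, we get $d_{\cN}-t_{\cN} > t_{\cN} \ge \Wt(\ve)$, so $\ve$ is the strict unique minimum-weight element of its coset. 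Hence $\cD_{\Rm{ND}}^{\Rm{ML}}(\vs) = \ve$, and no decoding error occurs.

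For (ii), pick $\vect{n}\in\cN$ with $\Wt(\vect{n}) = d_{\cN}$, and split the support of $\vect{n}$ into two disjoint parts to write $\vect{n} = \ve + \ve'$ with $\Wt(\ve) = t_{\cN}+1$ and $\Wt(\ve') = d_{\cN} - t_{\cN} - 1$. A short case check on the parity of $d_{\cN}$ shows $\Wt(\ve') \le t_{\cN}+1 = \Wt(\ve)$, with strict inequality when $d_{\cN}$ is odd. In the odd case, $\ve'$ is a strictly lighter member of the coset $\ve+\cN$, so $\cD_{\Rm{ND}}^{\Rm{ML}}(\ve H^{\tr}) \ne \ve$, which is a decoding error for the weight-$(t_{\cN}+1)$ error $\ve$. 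In the even case, $\ve$ and $\ve'$ are two distinct minimum-weight members of the same coset and both have weight $t_{\cN}+1$; whichever the tie-breaking rule of the ML decoder selects, the other one is a weight-$(t_{\cN}+1)$ error that is decoded incorrectly.

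Combining (i) and (ii) gives the claimed minimum weight of $t_{\cN}+1$. The only mildly delicate step is the tie-breaking in the even-$d_{\cN}$ case in (ii); handling it via the symmetric pair $\{\ve, \ve'\}$ (rather than trying to fix $\ve$ in advance) sidesteps any dependence on how ties are resolved, so I expect no significant obstacle beyond that bookkeeping.
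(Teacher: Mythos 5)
Your proof is correct and follows essentially the same route as the paper: the lower bound comes from the fact that a coset of $\cN$ contains at most one vector of weight at most $t_{\cN}$ (your triangle-inequality phrasing is the same argument), and the upper bound comes from splitting a minimum-weight vector of $\cN$ into a weight-$(t_{\cN}+1)$ part and a complementary part lying in the same coset. Your explicit parity case analysis and tie-breaking discussion just spell out what the paper handles implicitly by observing that at most one of the two pieces can be the decoder output.
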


\begin{proof}
  See Appendix~\ref{app:ND_decoder_proof}.  
\end{proof}


\begin{theorem}
  \label{thm:DD_decoder}
  The minimum weights of errors leading to decoding errors for $\cD_{\Rm{D}}^{\Rm{ML}}$ and $\cD_{\Rm{D}}^{\Rm{ML}*}$ both are $t+1$.
\end{theorem}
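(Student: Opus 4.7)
The plan is to prove two directions that together pin the minimum at $t+1$. For the first direction---no error of weight $\le t$ is misdecoded by either decoder---the $\cD_{\Rm{D}}^{\Rm{ML}*}$ half adapts the argument of Theorem~\ref{thm:ND_decoder} with $\cB$ in the role of the all-zero code: letting $\vv^{*}=\cD_{\Rm{ND}}^{\Rm{ML}}(\vs)$ gives $\Wt(\vv^{*})\le\Wt(\ve)\le t$, and if $\vv^{*}\notin\ve+\cB$ then $\vv^{*}+\ve\in\cN\backslash\cB$ has weight at least $d$, so the triangle inequality forces $\Wt(\vv^{*})\ge d-t\ge t+1$, a contradiction. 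Hence $\vv^{*}+\cB=\ve+\cB$.

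For the $\cD_{\Rm{D}}^{\Rm{ML}}$ half, setting $q=(p/3)/(1-p)\in(0,1)$ (valid for $p\in(0,3/4)$) rephrases the required inequality $p(\ve+\cB\mid\vs)>p(\vv+\cB\mid\vs)$ as
\[
\sum_{\vb\in\cB}q^{\Wt(\ve+\vb)}>\sum_{\vb\in\cB}q^{\Wt(\vv+\vb)}
\]
for every competing representative $\vv\in(\ve+\cN)\backslash(\ve+\cB)$. The structural input is again a triangle inequality: because $\vv-\ve\in\cN\backslash\cB$, one has $\Wt(\ve+\vb)+\Wt(\vv+\vb)\ge d$ for every $\vb\in\cB$. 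The $\vb=0$ summand alone yields the positive gap $q^{\Wt(\ve)}-q^{\Wt(\vv)}$ since $\Wt(\ve)\le t<d-t\le\Wt(\vv)$, and globally every element of $\vv+\cB$ has weight at least $t+1$ whereas $\ve+\cB$ already contains an element of weight at most $t$. A term-by-term pairing argument on $\cB$ should then promote these observations to a strict inequality between the two sums for all $q\in(0,1)$.

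For the second direction, pick a minimum-weight $\mathbf{w}\in\cN\backslash\cB$ (so $\Wt(\mathbf{w})=d$) and let $\ve$ be any vector with $\Supp(\ve)\subseteq\Supp(\mathbf{w})$ and $\Wt(\ve)=t+1$; then $\vv:=\ve+\mathbf{w}$ satisfies $\vv\in\ve+\cN$, $\vv\notin\ve+\cB$, and $\Wt(\vv)=d-(t+1)\le t+1$. When $d=2t+1$, $\Wt(\vv)=t$ and the first direction applied with $\vv$ in place of $\ve$ forces both decoders to output $\vv+\cB\neq\ve+\cB$; when $d=2t+2$, $\Wt(\vv)=\Wt(\ve)=t+1$ and the two cosets tie under each decoder's objective, so the wrong one may be returned. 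The main technical hurdle is the $\cD_{\Rm{D}}^{\Rm{ML}}$ half of the first direction: proving the coset-enumerator inequality uniformly for $q\in(0,1)$ rather than only in the low-$p$ regime where the minimum-weight term dominates. Crude bounds such as $\sum_{\vb}q^{\Wt(\vv+\vb)}\le|\cB|\,q^{t+1}$ break down as $q\to 1$, so the finishing argument must exploit the coset structure finely---e.g., by matching $\vb$ on both sides and showing that the guaranteed-positive $\vb=0$ contribution dominates the signed remainder.
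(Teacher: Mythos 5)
Your skeleton is the same as the paper's: the lower bound for $\cD_{\Rm{D}}^{\Rm{ML}*}$ via the observation that two vectors of weight at most $t$ lying in distinct $\cB$-cosets of $\vt(\vs)+\cN$ would sum to an element of $\cN\backslash\cB$ of weight at most $2t<d$, and the upper bound by splitting a minimum-weight element of $\cN\backslash\cB$ into two pieces of weights $t+1$ and $d-(t+1)\le t+1$ that lie in the same $\cN$-coset but different $\cB$-cosets, so that at most one of them can lie in the returned coset. Those two parts of your argument are correct and essentially identical to Appendix~B (one small care point: to get $\Wt(\ve+\mathbf{w})=d-(t+1)$ you should let $\ve$ take the \emph{entire} symbol $\mathbf{w}_i\in\GF{2}^2$ on $t+1$ positions of the support, since the weight is symplectic and, e.g., the $X$-part of a $Y$-position does not reduce the weight of the remainder).

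The genuine gap is exactly where you flag it: the lower bound for $\cD_{\Rm{D}}^{\Rm{ML}}$. The inequality $\sum_{\vb\in\cB}q^{\Wt(\ve+\vb)}>\sum_{\vb\in\cB}q^{\Wt(\vv+\vb)}$ is \emph{not} a consequence of the facts you list ($\ve+\cB$ contains a weight-$\le t$ element, every element of $\vv+\cB$ has weight $\ge t+1$), and no term-by-term pairing can rescue it uniformly in $q\in(0,1)$: a coset whose enumerator is dominated by many weight-$(t+1)$ elements can have larger total probability than a coset containing a single weight-$t$ element once $q$ exceeds roughly the reciprocal of that multiplicity. This is precisely the phenomenon that makes $\cD_{\Rm{D}}^{\Rm{ML}}$ and $\cD_{\Rm{D}}^{\Rm{ML}*}$ genuinely different decoders, so the uniform claim you are trying to establish should not be expected to hold. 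The paper does not attempt it either; it closes this step by noting that as $p\to 0$ the coset probability is governed by the minimum-weight representative, so that $\cD_{\Rm{D}}^{\Rm{ML}}$ agrees with $\cD_{\Rm{D}}^{\Rm{ML}*}$ on the relevant cosets and $d_{\Rm{D}}^{\Rm{ML}}\ge d_{\Rm{D}}^{\Rm{ML}*}\ge t+1$ follows. In other words, the statement for $\cD_{\Rm{D}}^{\Rm{ML}}$ is to be read in the small-$p$ regime (the decoder itself depends on $p$), and your proof should say so rather than chase a $q$-uniform enumerator inequality. A secondary consequence of this gap: in your second direction, the $d=2t+1$ case invokes ``the first direction applied with $\vv$ in place of $\ve$'' for $\cD_{\Rm{D}}^{\Rm{ML}}$, which is the very step you have not established; the cleaner route (as in the paper) is to note that at most one of the two representatives of weight $\le t+1$ can lie in the output coset, so the other is a decoding error of weight $\le t+1$, and then combine with the lower bound.
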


\begin{proof}
 See Appendix~\ref{app:DD_decoder_proof}.
\end{proof}

\begin{theorem}
  \label{thm:SPA_half_d_N}
  The minimum weight of errors that the SPA fails to decode for a toric code
  is $2$. For a $\llbracket 2L^2,2,L\rrbracket$ toric code with $L\geq 5$, 
  the number of such weight-$2$ errors is $12L^2$. 
\end{theorem}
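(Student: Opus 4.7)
The plan is to exploit the CSS structure of the toric code, under which the normalizer label code decomposes as $\cN = \cN_X \oplus \cN_Z$, where $\cN_X$ and $\cN_Z$ are cycle codes on the dual and primal $L\times L$ toric lattices respectively, both with minimum distance $4$. Consequently the factor graph of $\cN$ splits into two disjoint cycle-code factor graphs, and running SPA on $\cN$ is equivalent to running SPA independently on each part. Since each part is a cycle code, I would invoke the equivalence of SPA and symbolwise graph-cover decoding from~\cite{pfister2013relevance} and analyze failures via pseudocodewords obtained from small (in fact $M=2$) covers.

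I would first dispose of the weight-$1$ case: a single $X$- or $Z$-error produces a syndrome with exactly two adjacent defects whose only weight-$1$ solution is the error edge itself. No competing pseudocodeword exists, so SPA converges to this codeword and decodes correctly, establishing that the minimum weight of an SPA-failing error is at least $2$.

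For weight-$2$ errors, I would argue that two weight-$2$ $Z$-errors produce the same syndrome iff their symmetric difference is a weight-$4$ codeword of $\cN_Z$. For $L\geq 5$ the only such codewords are face boundaries, because any topologically non-trivial cycle has weight at least $L > 4$. Hence a weight-$2$ $Z$-error is SPA-failing iff its support is a $2$-subset of some face boundary; in that case the two competing weight-$2$ solutions lift to a codeword in an $M=2$ cover that projects to a symmetric pseudocodeword with value $1/2$ on each of the $4$ face edges, and SPA beliefs converge to this fixed point whose hard decision violates the syndrome, returning the wrong $\cB$-coset. The same argument applied to $X$-errors replaces face boundaries with vertex stars.

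The count then follows: each of the $L^2$ faces contributes $\binom{4}{2}=6$ failing weight-$2$ $Z$-errors, and each of the $L^2$ vertices contributes $6$ failing weight-$2$ $X$-errors, giving $6L^2 + 6L^2 = 12L^2$. The hypothesis $L\geq 5$ is essential: it rules out additional weight-$4$ codewords of $\cN_X$ and $\cN_Z$ coming from wrap-around cycles, and it ensures that distinct face boundaries (respectively vertex stars) share at most one edge, so each failing error is attributed to a unique stabilizer support. The main obstacle I anticipate is the converse direction---verifying that any weight-$2$ error whose support is \emph{not} a $2$-subset of a vertex star or face boundary admits a unique minimum-weight representative in its coset and is therefore decoded correctly by SPA---which should reduce to a short case analysis exploiting the girth bound $d_{\cN_X}=d_{\cN_Z}=4$.
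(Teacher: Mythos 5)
Your argument is essentially the paper's own proof: the paper likewise identifies the SPA-failing weight-$2$ errors as the $\binom{4}{2}=6$ two-subsets of each of the $2L^2$ weight-$4$ stabilizer supports (its ``two types'' in Fig.~\ref{fig:PSPA_toric2} are your adjacent and opposite edge pairs within a face boundary or vertex star), derives failure from the symmetric fixed-point pseudocodeword $[\om,\om,\om,\om]$ whose hard decision cannot match the syndrome, and simply asserts that weight-$1$ errors are correctable. You are somewhat more explicit about why $L\geq 5$ is needed and about the converse direction (that no other weight-$2$ error fails), which you rightly flag as the remaining obstacle and which the paper leaves entirely implicit; note also that under the paper's weight convention ($\Wt([1,1])=1$) neither argument accounts for mixed-type weight-$2$ errors such as $Y_iY_j$ on two edges of a common face, whose $Z$-part alone already produces the same symmetric failure, so the stated count of $12L^2$ tacitly restricts to pure-$X$ and pure-$Z$ errors in both treatments.
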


\begin{proof}
  For a $\llbracket 2L^2,2,L\rrbracket$ toric code $\cC$ with $L < 5$, it cannot 
  correct some weight-2 errors because of its minimum distance.
  For a $\llbracket 2L^2,2,L\rrbracket$ toric code $\cC$ with $L \geq 5$, 
  there are two types of weight-$2$ errors that cannot be corrected using SPA decoding as shown in
  Fig.~\ref{fig:PSPA_toric2}. (Here and for other similar figures we use the
  drawing conventions listed in Table~\ref{table:drawing:conventions:1};
  moreover, edges with components close to $0$ are not drawn). 
  For both cases, we obtain an SPA pseudocodeword $\boldV{\om} = [\om,\om,\om,\om]$, 
  for some $\om\in(0,1]$. 
  When the SPA decoder makes hard decisions based on $\boldV{\om}$, it outputs either 
  $[0,0,0,0]$ or $[1,1,1,1]$ and hence fails to match the syndrome.
  The minimum weight of errors resulting in decoding failures for toric codes
  is $2$, since weight-$1$ errors can be corrected. 
  If we count the number of such weight-$2$ errors, there
  are $6$ in each length-$4$ cycle and $12L^2$ in total.
\end{proof}

For a quantum cycle code with even $d_{\cN}$, the minimum weight of
errors that the SPA fails to decode is no larger than $d_{\cN}/2$ because of similar
problems as in Fig.~\ref{fig:PSPA_toric2}.


\begin{theorem}
  \label{thm:SPA_d_N}
  The minimum weight of errors that the SPA fails to decode for a toric code is no
  larger than $d_{\cN}$.
\end{theorem}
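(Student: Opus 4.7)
The plan is to exhibit an error of weight at most $d_{\cN}$ on which the SPA fails, with the failure mechanism distinct from the weight-$d_{\cN}/2$ symmetry errors captured by Theorem~\ref{thm:SPA_half_d_N}. The key tool is the equivalence, established in~\cite{pfister2013relevance}, between SPA decoding and symbolwise graph-cover decoding for cycle codes, so that the analysis reduces to identifying a pseudocodeword induced by a codeword on a small $M$-cover of the base normal factor graph of $\cN$ that is competitive in Bethe free energy with the correct decoding.

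First, I would identify a shortest cycle $C$ in the Tanner graph of $\cN$; its indicator vector is a codeword $\boldV{c}\in\cN$ of weight $d_{\cN}$. For the toric code, $C$ is the boundary of a plaquette or the coboundary of a vertex. Lifting the graph to a $2$-cover in which $C$ splits into two disjoint copies, the codeword supported on only one of these lifted copies projects to a pseudocodeword $\boldV{\om}$ whose components are $1/2$ on the $d_{\cN}$ edges of $C$ and $0$ elsewhere.

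Second, I would construct an explicit error $\ve$ with $\Wt(\ve)\leq d_{\cN}$ whose syndrome forces the SPA to converge to a pseudocodeword close to $\boldV{\om}$. The natural candidate is an error whose support straddles two shortest cycles that share one or more edges, so that the induced belief ambiguity extends beyond the simple half-cycle tie of Theorem~\ref{thm:SPA_half_d_N}; the hard decision on such a converged pseudocodeword then either fails to satisfy the syndrome constraint or, after tie breaking, selects a vector in the wrong coset of $\cB$. This would be verified by a direct $M$-cover calculation analogous to the one underlying Fig.~\ref{fig:PSPA_toric2}.

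The main obstacle is ensuring that the constructed failure is genuinely different from the one in Theorem~\ref{thm:SPA_half_d_N}, rather than a restatement of it, and that it actually occurs at weight at most $d_{\cN}$. For example, setting $\ve=\boldV{c}$ itself yields the zero syndrome and the SPA output $\mathbf{0}$ lies in the same coset of $\cB$ as $\boldV{c}$ (since weight-$d_{\cN}$ cycles of the toric code Tanner graph are stabilizers), so no failure arises from this naive choice. The substantive technical work is therefore to pinpoint the correct overlapping-cycle configuration (two adjacent plaquettes, or a plaquette and an incident vertex coboundary) and to verify the Bethe free energy inequality that makes $\boldV{\om}$ preferred by the SPA over the correct integer codeword.
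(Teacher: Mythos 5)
Your framing (graph-cover equivalence for cycle codes, reasoning on a $2$-cover) matches the paper's, and your observation that the naive choice $\ve=\boldV{c}$ gives zero syndrome and hence no decoding failure is correct. Nevertheless there is a genuine gap: the proposal never actually exhibits a failing error of weight at most $d_{\cN}$. You explicitly defer ``the substantive technical work'' of pinpointing the configuration and verifying the free-energy comparison, and the candidates you do name (an error whose support straddles two adjacent plaquettes, or a plaquette and an incident vertex coboundary) are neither made precise nor shown to cause an SPA failure. In addition, the pseudocodeword you construct (components $1/2$ on one lifted copy of a shortest cycle $C$, $0$ elsewhere) is an object associated with the all-zero syndrome; to make it relevant you would still have to tie it to the syndrome of a concrete nonzero error, and that is exactly the step that is missing.

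The paper closes this gap with a different construction: the error is an \emph{open path} of length $d_{\cN}$ starting from a check on a minimum-weight cycle, so the syndrome consists of the two unsatisfied checks at the path's endpoints. In a $2$-cover of the relevant region (Fig.~\ref{fig:PSPA_toric2_2cover}) there are then two valid configurations: the lifted error path, which projects down to a syndrome-valid vector, and a second configuration that connects each unsatisfied check to its own copy in the other sheet, which projects to half-integral components and cannot be realized by any vector matching the syndrome. The SPA pseudocodeword is a mixture of these two configurations, and no rescaling or hard decision of such a mixture yields an output with valid syndrome; this is the concrete weight-$d_{\cN}$ failure your proposal still needs to supply, and its mechanism is genuinely different from the weight-$2$ symmetry ties of Theorem~\ref{thm:SPA_half_d_N}, which addresses the concern you raised at the end.
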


\begin{proof}
  We want to show that there exist errors of weight $d_{\cN}$ that the SPA fails
  to decode.  Since the minimum weight of vectors in the normalizer label code
  $\cN$ is $d_{\cN}$, there exists a cycle of length $d_{\cN}$ in $\cN$ and we
  assume that the error is a path of length $d_{\cN}$ starting from any check
  involved in that cycle. Fig.~\ref{fig:PSPA_toric2_2cover} is a $2$-cover of
  the relevant part of a toric code. We claim that the SPA cannot decode the above-mentioned error. 
  The reason is as follows. 
 There are two valid configurations in the 2-cover, where the red one can be projected down as a codeword with a valid syndrome, while the blue one cannot.
  The components of the pseudocodewords resulting from these valid configurations are shown next to the
  corresponding edges in Fig.~\ref{fig:PSPA_toric2_2cover}.
  The SPA pseudocodeword is a linear combination of such pseudocodewords, e.g., a rescaled SPA pseudocodeword in Fig.~\ref{fig:PSPA_toric9}, and the SPA decoder fails to output a vector with a valid syndrome no matter how to scale such SPA pseudocodeword.
\end{proof}

More generally, for quantum cycle codes, the SPA fails to decode errors of minimum weight no larger than $d_{\cN}$ for similar reasons.

\begin{table}[t]
  \caption{Drawing conventions for figures.}
  \label{table:drawing:conventions:1}
  \begin{center}
    \begin{tabular}{|l|l|}
      \hline
      empty vertex   & $s_i = 0$ for syndrome bit associated with $i$-th
      parity check \\
      \hline
      filled vertex  & $s_i = 1$ for syndrome bit associated with $i$-th
      parity check \\
      \hline
      black edge & channel introduced no error at that location \\
      \hline
      red edge & channel introduced an error at that location \\
      \hline
    \end{tabular}
  \end{center}
\end{table}

\begin{figure}[t]
  \begin{center}
    \begin{tikzpicture}[x=1cm, y= 1cm, node/.style={draw=none, inner sep = 0pt},
                          labeledh/.style={midway,above=0cm}, 
                          labeledhb/.style={midway,below=0cm},
                          labeledvl/.style={midway,left=0cm},
                          labeledvr/.style={midway,right=0cm},
                          check/.style={draw,fill=none,minimum size = 6pt,inner sep = 0pt},
                          checks/.style={draw,fill=black,minimum size = 6pt,inner sep = 0pt}
                          ]
	
\def\x{0pt}
\def\y{0pt}
\def\xa{\x+40pt}
\def\xb{\xa+40pt}
\def\xc{\xb+40pt}
\def\xd{\xc+40pt}
\def\xe{\xd+40pt}
\def\xf{\xe+40pt}
\def\ya{\y-40pt}
\def\yb{\ya-40pt}
\def\yc{\yb-40pt}
\def\yd{\yc-40pt}
\def\ye{\yd-40pt}
\def\yf{\ye-40pt}

\node[checks] (c11) at (\xa,\ya) {};
\node[check] (c12) at (\xb,\ya) {};

\node[check] (c21) at (\xa,\yb) {};
\node[checks] (c22) at (\xb,\yb) {};

\path[draw, thick, red] (c11) -- (c12) node[labeledhb] {$1$};
\path[draw, thick] (c21) -- (c22) node[labeledh] {$1$};

\path[draw, thick] (c11) -- (c21) node[labeledvl] {$1$};
\path[draw, thick, red] (c12) -- (c22) node[labeledvr] {$1$};

\node[checks] (c13) at (\xc,\ya) {};
\node[checks] (c14) at (\xd,\ya) {};

\node[checks] (c23) at (\xc,\yb) {};
\node[checks] (c24) at (\xd,\yb) {};

\path[draw, thick, red] (c13) -- (c14) node[labeledhb] {$1$};
\path[draw, thick, red] (c23) -- (c24) node[labeledh] {$1$};

\path[draw, thick] (c13) -- (c23) node[labeledvl] {$1$};
\path[draw, thick] (c14) -- (c24) node[labeledvr] {$1$};
\end{tikzpicture}
  \end{center}
  \caption{Rescaled pseudocodewords of a toric code.}
  \label{fig:PSPA_toric2}

  \begin{center}
    \begin{tikzpicture}[x=1cm, y= 1cm, node/.style={draw=none, inner sep = 0pt},
                          labeledha/.style={midway,above=0cm}, 
                          labeledhb/.style={midway,below=0cm},
                          labeledvl/.style={midway,left=0cm},
                          labeledvr/.style={midway,right=0cm},
                          check/.style={draw,fill=none,minimum size = 6pt,inner sep = 0pt},
                          checks/.style={draw,fill=black,minimum size = 6pt,inner sep = 0pt}
                          ]
	
\def\x{0pt}
\def\y{0pt}
\def\xa{\x+40pt}
\def\xb{\xa+40pt}
\def\xc{\xb+40pt}
\def\xd{\xc+40pt}
\def\xe{\xd+40pt}
\def\xf{\xe+40pt}
\def\ya{\y-40pt}
\def\yb{\ya-40pt}
\def\yc{\yb-40pt}
\def\yd{\yc-40pt}
\def\ye{\yd-40pt}
\def\yf{\ye-40pt}

\node[check] (c11) at (\xa,\ya) {};
\node[check] (cc11) at (\xa-6pt,\ya+6pt) {};

\node[check] (c12) at (\xb,\ya) {};
\node[check] (cc12) at (\xb-6pt,\ya+6pt) {};

\node[check] (c14) at (\xd,\ya) {};
\node[check] (cc14) at (\xd-6pt,\ya+6pt) {};

\node[check] (c15) at (\xe,\ya) {};
\node[check] (cc15) at (\xe-6pt,\ya+6pt) {};

\node[check] (c21) at (\xa,\yb) {};
\node[check] (cc21) at (\xa-6pt,\yb+6pt) {};

\node[checks] (c22) at (\xb,\yb) {};
\node[checks] (cc22) at (\xb-6pt,\yb+6pt) {};

\node (c23) at (\xc,\yb) {\color{red}...};
\node (cc23) at (\xc-6pt,\yb+6pt) {\color{red}...};

\node[checks] (c24) at (\xd,\yb) {};
\node[checks] (cc24) at (\xd-6pt,\yb+6pt) {};

\node[check] (c25) at (\xe,\yb) {};
\node[check] (cc25) at (\xe-6pt,\yb+6pt) {};

\path[draw, thick] (c11) -- (cc12) node {};
\path[draw, line width=0.4mm, blue] (cc11) -- (c12) node[labeledhb] {$\frac{1}{2}$};

\path[draw, thick] (c14) -- (cc15) node {};
\path[draw, line width=0.4mm, blue] (cc14) -- (c15) node[labeledhb] {$\frac{1}{2}$};

\path[draw, thick] (c11) -- (c21) node {};
\path[draw, line width=0.4mm, blue] (cc11) -- (cc21) node[labeledvl] {$\frac{1}{2}$};

\path[draw, thick] (cc12) -- (cc22) node {};
\path[draw, line width=0.4mm, blue] (c12) -- (c22) node[labeledvr] {\color{blue}$\frac{1}{2}$};

\path[draw, thick] (c14) -- (c24) node {};
\path[draw, line width=0.4mm, blue] (cc14) -- (cc24) node[labeledvl] {$\frac{1}{2}$};

\path[draw, thick] (cc15) -- (cc25) node {};
\path[draw, line width=0.4mm, blue] (c15) -- (c25) node[labeledvr] {$\frac{1}{2}$};

\path[draw, thick] (c21) -- (c22) node {};
\path[draw, line width=0.4mm, blue] (cc21) -- (cc22) node[labeledha] {$\frac{1}{2}$};

\path[draw, thick, red] (c22) -- (c23) node {};
\path[draw, thick, red] (cc22) -- (cc23) node[labeledha] {$1$};

\path[draw, thick, red] (c23) -- (c24) node {};
\path[draw, thick, red] (cc23) -- (cc24) node[labeledha] {$1$};

\path[draw, line width=0.4mm, blue] (c24) -- (c25) node {};
\path[draw, thick] (cc24) -- (cc25) node[labeledha] {\color{blue}$\frac{1}{2}$};



\end{tikzpicture}
  \end{center}
  \caption{Pseudocodewords (blue or red) of a toric code.}
  \label{fig:PSPA_toric2_2cover}

  \begin{center}
   \resizebox{1\columnwidth}{!}{\begin{tikzpicture}[x=1cm, y= 1cm, node/.style={draw=none, inner sep = 0pt},
                          labeledha/.style={midway,above=-2pt}, 
                          labeledhb/.style={midway,below=-2pt}, 
                          labeledvl/.style={midway,left=-2pt},
                          labeledvr/.style={midway,right=-2pt},
                          check/.style={draw,fill=none,minimum size = 6pt,inner sep = 0pt},
                          checks/.style={draw,fill=black,minimum size = 6pt,inner sep = 0pt}
                          ]
	
\def\x{0pt}
\def\y{0pt}
\def\xa{\x+40pt}
\def\xb{\xa+40pt}
\def\xc{\xb+40pt}
\def\xd{\xc+40pt}
\def\xe{\xd+40pt}
\def\xf{\xe+40pt}
\def\ya{\y-40pt}
\def\yb{\ya-40pt}
\def\yc{\yb-40pt}
\def\yd{\yc-40pt}
\def\ye{\yd-40pt}
\def\yf{\ye-40pt}

\node[check] (c11) at (\xa,\ya) {};
\node[check] (c12) at (\xb,\ya) {};
\node[check] (c13) at (\xc,\ya) {};

\node[check] (c21) at (\xa,\yb) {};
\node[checks] (c22) at (\xb,\yb) {};
\node[check] (c23) at (\xc,\yb) {};
\node[check] (c24) at (\xd,\yb) {};
\node[check] (c25) at (\xe,\yb) {};
\node[check] (c26) at (\xf,\yb) {};

\node[check] (c31) at (\xa,\yc) {};
\node[check] (c32) at (\xb,\yc) {};
\node[check] (c33) at (\xc,\yc) {};
\node[check] (c34) at (\xd,\yc) {};
\node[checks] (c35) at (\xe,\yc) {};
\node[check] (c36) at (\xf,\yc) {};

\node[check] (c44) at (\xd,\yd) {};
\node[check] (c45) at (\xe,\yd) {};
\node[check] (c46) at (\xf,\yd) {};

\path[draw, thick] (c11) -- (c12) node[labeledha] 		{\small $1$}; 
\path[draw, thick] (c12) -- (c13) node[labeledha] 		{\small $1$};
\path[draw, thick] (c11) -- (c12) node[labeledhb] 		{\scriptsize $(1)$};
\path[draw, thick] (c12) -- (c13) node[labeledhb] 		{\scriptsize $(2)$};

\path[draw, thick] (c11) -- (c21) node[labeledvr] 		{\small $1$};
\path[draw, thick] (c12) -- (c22) node[labeledvr] 		{\small $2$};
\path[draw, thick] (c13) -- (c23) node[labeledvr] 		{\small $1$};
\path[draw, thick] (c11) -- (c21) node[labeledvl] 		{\scriptsize $(3)$};
\path[draw, thick] (c12) -- (c22) node[labeledvl] 		{\scriptsize $(4)$};
\path[draw, thick] (c13) -- (c23) node[labeledvl] 		{\scriptsize $(5)$};

\path[draw, thick] (c21) -- (c22) node[labeledha] 		{\small $2$};
\path[draw, thick, red] (c22) -- (c23) node[labeledha] {\small $5$};
\path[draw, thick, red] (c23) -- (c24) node[labeledha] {\small $2$};
\path[draw, thick, red] (c24) -- (c25) node[labeledha] {\small $2$};
\path[draw, thick] (c25) -- (c26) node[labeledha] 		{\small $1$};
\path[draw, thick] (c21) -- (c22) node[labeledhb] 		{\scriptsize $(6)$};
\path[draw, thick, red] (c22) -- (c23) node[labeledhb] {\scriptsize $(7)$};
\path[draw, thick, red] (c23) -- (c24) node[labeledhb] {\scriptsize $(8)$};
\path[draw, thick, red] (c24) -- (c25) node[labeledhb] {\scriptsize $(9)$};
\path[draw, thick] (c25) -- (c26) node[labeledhb] 		{\scriptsize $(10)$};

\path[draw, thick] (c21) -- (c31) node[labeledvr] 		{\small $1$};
\path[draw, thick] (c22) -- (c32) node[labeledvr] 		{\small $3$};
\path[draw, thick] (c23) -- (c33) node[labeledvr] 		{\small $2$};
\path[draw, thick] (c24) -- (c34) node[labeledvr] 		{\small $2$};
\path[draw, thick, red] (c25) -- (c35) node[labeledvr] {\small $3$};
\path[draw, thick] (c26) -- (c36) node[labeledvr] 		{\small $1$};
\path[draw, thick] (c21) -- (c31) node[labeledvl] 		{\scriptsize $(11)$};
\path[draw, thick] (c22) -- (c32) node[labeledvl] 		{\scriptsize $(12)$};
\path[draw, thick] (c23) -- (c33) node[labeledvl] 		{\scriptsize $(13)$};
\path[draw, thick] (c24) -- (c34) node[labeledvl] 		{\scriptsize $(14)$};
\path[draw, thick, red] (c25) -- (c35) node[labeledvl] 	{\scriptsize $(15)$};
\path[draw, thick] (c26) -- (c36) node[labeledvl] 		{\scriptsize $(16)$};

\path[draw, thick] (c31) -- (c32) node[labeledha] 		{\small $1$};
\path[draw, thick] (c32) -- (c33) node[labeledha] 		{\small $2$};
\path[draw, thick] (c33) -- (c34) node[labeledha] 		{\small $2$};
\path[draw, thick] (c34) -- (c35) node[labeledha] 		{\small $5$};
\path[draw, thick] (c35) -- (c36) node[labeledha] 		{\small $2$};
\path[draw, thick] (c31) -- (c32) node[labeledhb] 		{\scriptsize $(17)$};
\path[draw, thick] (c32) -- (c33) node[labeledhb] 		{\scriptsize $(18)$};
\path[draw, thick] (c33) -- (c34) node[labeledhb] 		{\scriptsize $(19)$};
\path[draw, thick] (c34) -- (c35) node[labeledhb] 		{\scriptsize $(20)$};
\path[draw, thick] (c35) -- (c36) node[labeledhb] 		{\scriptsize $(21)$};

\path[draw, thick] (c34) -- (c44) node[labeledvr] 		{\small $1$};
\path[draw, thick] (c35) -- (c45) node[labeledvr] 		{\small $2$};
\path[draw, thick] (c36) -- (c46) node[labeledvr] 		{\small $1$};
\path[draw, thick] (c34) -- (c44) node[labeledvl] 		{\scriptsize $(22)$};
\path[draw, thick] (c35) -- (c45) node[labeledvl] 		{\scriptsize $(23)$};
\path[draw, thick] (c36) -- (c46) node[labeledvl] 		{\scriptsize $(24)$};

\path[draw, thick] (c44) -- (c45) node[labeledha] 		{\small $1$};
\path[draw, thick] (c45) -- (c46) node[labeledha] 		{\small $1$};
\path[draw, thick] (c44) -- (c45) node[labeledhb] 		{\scriptsize $(25)$};
\path[draw, thick] (c45) -- (c46) node[labeledhb] 		{\scriptsize $(26)$};
\end{tikzpicture}}
  \end{center}
  \caption{A rescaled pseudocodeword of a toric code, where component $\tilde{\om}_i$ and index $(i)$ are shown next to the $i^{\Rm{th}}$ edge.}
  \label{fig:PSPA_toric9}

\end{figure}


\section{Pseudocodeword-based Decoding}
\label{sec:pcw:decoding:0}

If we want to improve the performance of SPA of quantum stabilizer cycle
codes, or, more generally, quantum stabilizer codes, the first task is to
address the problem mentioned in the proof of Theorem~\ref{thm:SPA_half_d_N}
by breaking the symmetry of the SPA to avoid ending up with pseudocodewords
like the ones in Fig.~\ref{fig:PSPA_toric2}.  


\subsection{Reweighted SPA Decoding}
\label{sec:pcw:decoding:1}

Our first approach is to use the reweighted SPA decoding proposed
in~\cite{wymeersch2011uniformly}, which reweights message
calculations. However, instead of uniformly reweighing the messages, we
randomly select weights from a certain interval. We call the resulting
algorithm randomly reweighted SPA (RR-SPA). Empirically, this method can
improve the performance of SPA decoding of the toric codes, but there is not
much improvement for MacKay's bicycle codes. 


\subsection{Initial-message-reweighted SPA of Quantum Stabilizer Codes}

In order to introduce our second approach, we recall that the SPA is 
based on the log-likelihood ratios (LLRs) $\gamma_i \defeq \log \left(\frac{\Rm{Pr}(E_i=0)}{\Rm{Pr}(E_i=1)}\right)$ and the syndrome $\vs$.
Our second approach is called initial-message-reweighted SPA (IMR-SPA) and described in Algorithm~\ref{alg:rSPA}. The IMR-SPA also runs the SPA, however, with the reweighted LLRs, i.e., $\gamma_i$ is replaced by $\alpha_i \gamma_i$, where $\alpha_i$ is a weighting factor randomly generated from some interval. Empirically, it is 
observed that the RR-SPA and the IMR-SPA have similar performance for the
toric codes. From an analysis point of view, the IMR-SPA may be preferable
compared to the RR-SPA and other approaches like random perturbation~\cite{poulin2008iterative}, because after suitable adapations, we can apply the
Bethe free energy framework~\cite{Yedidia:Freeman:Weiss:05:1, vontobel2013counting, pfister2013relevance} to analyze the IMR-SPA.

We briefly explain why the IMR-SPA helps to improve the performance of SPA
decoding of quantum stabilizer codes. 
Namely, assume that we know, for analysis purposes, the actual
error vector $\tilde{\ve}$. For SPA decoding, using the
LLR vector $\boldV{\gamma}$ with the syndrome $\vs$ is equivalent
to using the LLR vector $\tilde{\boldV{\gamma}}$, where $\tilde{\gamma}_i \defeq
(-1)^{\tilde{e}_i} \gamma_i $, with the syndrome $\mb{0}$.  
SPA decoding succeeds when it converges to the all-zero vector based on the LLR vector
$\tilde{\boldV{\gamma}}$ and the syndrome $\mb{0}$.  The IMR-SPA changes the
LLR vector for the standard SPA from $\tilde{\gamma}_i$ to be $\alpha_i \tilde{\gamma}_i$ and hence may
move some $\tilde{\boldV{\gamma}}$ from the ``bad'' region to the ``good''
region in which the SPA converges to the all-zero vector.


\begin{algorithm}[t] 
\caption{Initial-message-reweighted SPA (IMR-SPA)} 
\label{alg:rSPA}
\begin{algorithmic}[1]
\REQUIRE the syndrome $\vs$, the maximum number of SPA iterations, and the reweighting range $[a,b]$.
\ENSURE $\vv+\cB$. 
\STATE Use SPA to obtain an output vector $\vv$.
\IF {$H \vv^{\tr} = \vs^{\tr}$ (equivalently $\vv \in \vt(\vs)+\cN$)}
	\STATE Output $\vv+\cB$. 
\ELSE
    	\WHILE{$H \vv^{\tr} \neq \vs^{\tr}$}
    		\STATE For the $i^{\Rm{th}}$ variable, randomly generate a weighting factor $\alpha_i \in [a,b]$ and reweight the LLR to the SPA from $\gamma_i$ to be $\alpha_i \gamma_i$. 
    		\STATE Use SPA to obtain an output vector $\vv$.\\ 
    		\COMMENT{Set the max.\ number of trial times if necessary.}
    	\ENDWHILE
    	\STATE Output $\vv+\cB$. 
\ENDIF
\end{algorithmic}
\end{algorithm}



\subsection{Pseudocodeword-based Decoder of Quantum Cycle Codes}

For quantum cycle codes, the IMR-SPA decoding can improve the minimum weight
of errors leading to decoding failures beyond $d_{\cN}/2$, but it is still limited by the problems
mentioned in Theorem~\ref{thm:SPA_d_N}. Therefore, we propose a
pseudocodeword-based decoder abbreviated as SPA+PCWD, which is described in Algorithm~\ref{alg:pSPA}, to further improve the performance of SPA decoding
for quantum cycle codes.
  When SPA decoding fails to output a vector with valid syndrome, 
  we hope to make use of the SPA pseudocodeword to obtain one with valid syndrome.
  There are two difficulties in this task:
  1) the components contributed by codewords from graph covers without a valid syndrome need to be removed;
  2) the components contributed by codewords from graph covers with a valid syndrome are mixed together and need to be separated.

The main idea of the decoder is to first decompose the pseudocodeword $\boldV{\om}$ into a set of paths
and then output a vector $\vv$ with a valid syndrome, where the support of $\vv$ is determined by a collection of paths.
The paths are obtained by starting from an unsatisfied check $s_i$ and 
by always following the edge with the largest possible component of $\boldV{\om}$ for the next step without repetition until reaching an unsatisfied check $s_{i'}$, where the weight of the path is defined as the minimum component of $\boldV{\om}$ on that path.
The contribution of that path from the pseudocodeword is then subtracted and the path is included in the set of candidate paths.
We use a simple example to explain the procedure of Algorithm~\ref{alg:pSPA}.


\begin{example}
Consider a $\llbracket 2L^2,2,L\rrbracket$ toric code of $L = 9$ and $p=0.0123$. 
We obtain an SPA pseudocodeword $\boldV{\om}$ after 100 iterations. 
The rescaled pseudocodeword $\tilde{\boldV{\om}} \defeq \boldV{\om}/0.0836$ is shown in Fig.~\ref{fig:PSPA_toric9}, where edges with component $\tilde{\om}_i<0.005$ are omitted.
First, by Algorithm~\ref{alg:PCWD} we can obtain a set of paths $P$, 
e.g., $P_1 = \{7,8,9,15\}$ and $P_2=\{7,13,19,20\}$ with $\hat{\om}_i = 2$ and $S_i = \{1,2\}$ for $i=1,2$.
Then, Algorithm~\ref{alg:pSPA} picks an arbitrary path $P_{i^{*}}$ since their costs are the same and outputs $\vv+\cB$, where the support of $\vv$ is determined by $P_{i^{*}}$.
\end{example}


\begin{algorithm}[t] 
\caption{Pseudocodeword-based decoder (SPA+PCWD) for quantum cycle codes} 
\label{alg:pSPA} 
\begin{algorithmic}[1]
\REQUIRE the syndrome $\vs$ and the max.\ number of SPA iterations.
\ENSURE $\vv+\cB$. 

\STATE Use SPA to find an output vector $\vv$ and obtain an SPA pseudocodeword $\boldV{\om}$.

\IF {$H \vv^{\tr} = \vs^{\tr}$}
	\STATE Return $\vv+\cB$. 
\ELSE
	\STATE Obtain $P= \{P_i\}$, $\{S_i\}$, and $\hat{\boldV{\om}}$ by Algorithm~\ref{alg:PCWD}.
    	\STATE $\vv \leftarrow \mb{0}$. 
    	\STATE $\cJ \leftarrow \{j\ |\ s_j\!\neq\! 0\}$ \{The index set of unsatisfied checks.\}
    	\WHILE{$\cJ \neq \emptyset$ and $P \neq \emptyset$} 
    		\STATE $i \leftarrow \argmin_{j: P_j\in P} (1\!-\!\hat{\om}_j) \cdot |P_j|$. \{Minimize cost.\}
    		\STATE $v_j \leftarrow 1$ $\forall j \in P_{i}$. \{Update $\vv$ w.r.t. $P_i$.\}
    		\STATE $\cJ \leftarrow \cJ \backslash S_i$. \{Update unsatisfied checks.\}
    		\STATE $P \leftarrow \big\{P_j \in P\ \big|\ S_j\subseteq\cJ,\ P_j \subseteq \{\ell\ |\ v_\ell\!=\!0\}\big\}$. 
    		\COMMENT {Update the set of available paths.}
    	\ENDWHILE \\
\COMMENT{A modification of this algorithm with the above while loop replaced by an LP with cost for each path $P_i$ as $\lambda_i \defeq (1-\hat{\om}_i) \cdot |P_i|$ is referred as SPA+LPPCWD.}
\ENDIF
\end{algorithmic}
\end{algorithm}

\begin{algorithm}[t] 
\caption{Pseudocodeword decomposition (PCWD) for quantum cycle codes} 
\label{alg:PCWD} 
\begin{algorithmic}[1]
\REQUIRE a pseudocodeword $\boldV{\om}$ and the syndrome $\vs$.
\ENSURE A set of paths $P = \{P_i\}$, a set of corresponding end checks $S=\{S_i\}$, and a weight vector $\hat{\boldV{\om}}$.
\STATE $\cJ \leftarrow  \{j\ |\ s_j\!\neq\! 0\}$, $P \leftarrow \emptyset$, and $S \leftarrow \emptyset$.\\

\WHILE{$\cJ \neq \emptyset$}
	\STATE Start from each $s_j$, $j \in \cJ$, and follow the edge with the largest possible component of $\boldV{\om}$ at each step without repetition until reaching $s_{j'}$, $j' \in \cJ$, to obtain a path $P_j$ with weight $\bar{\om}_j \leftarrow \min_{\ell \in P_j} \om_\ell$ and $S_j \leftarrow \{j,j'\}$.
	\STATE $i \leftarrow \argmin_{j: P_j\in P} (1-\bar{\om}_j) \cdot |P_j|$. \{Find min. cost one.\}
	\STATE $\om_\ell \leftarrow \om_\ell - \bar{\om}_{i}$ $\forall \ell \in P_{i}$. \{Subtract $P_i$'s contribution.\}
	\STATE $P \leftarrow P \cup \{ P_{i}\}$, $S \leftarrow S \cup \{ S_{i}\}$, and $\hat{\om}_i \leftarrow \bar{\om}_i$ if $i \neq i'$.
	\COMMENT{Include $P_i$ in $P$ if $P_i$ is a path.}
	\STATE $\cJ \leftarrow \cJ \backslash \{j \in \cJ\ |\ \bar{\om}_j=0\}$. \{Remove isolated checks.\}
\ENDWHILE
\end{algorithmic}
\end{algorithm}


\section{Simulation Results}
\label{sec:simulation:results:1}

Fig.~\ref{fig:pSPA} shows some simulation results of SPA+LPPCWD decoding of
toric codes described in Algorithm~\ref{alg:pSPA}, where we use at most 100
iterations of SPA to obtain SPA pseudocodewords.  According to the simulation
results in~\cite{liu2018neural}, the performance of the original SPA gets
worse as the code block length of toric codes increases.  As shown in
Fig.~\ref{fig:pSPA}, the performance of the SPA+LPPCWD improves as the code
block length of toric codes increases and the SPA+LPPCWD has similar performance
as the neural belief-propagation decoder in~\cite{liu2018neural} and the Markov chain 
Monte Carlo algorithm in~\cite{hutter2014efficient}.  
Fig.~\ref{fig:pSPA_Wt} shows the weight distribution of the decoding errors of 
SPA+PCWD decoding of toric codes, where the minimum weight of errors increases 
as the block length increases. We also
observed that the IMR-SPA and the RR-SPA have similar performance as the
SPA+PCWD for toric codes with $L < 9$, but unfortunately they are limited by
some weight-$4$ errors for $L \geq 9$.

Fig.~\ref{fig:rSPA} shows some simulation results of the IMR-SPA of a
$\llbracket 256,32 \rrbracket$ MacKay's bicycle code with the total row weight
$16$ for $\cN$.  The maximum number of iterations of SPA is 100 and the maximum 
number of IMR trials is 10.  The IMR-SPA achieves lower WER at around $p=10^{-2}$
compared with the the neural belief-propagation decoder~\cite{liu2018neural} for MacKay's bicycle codes with the same parameters.  


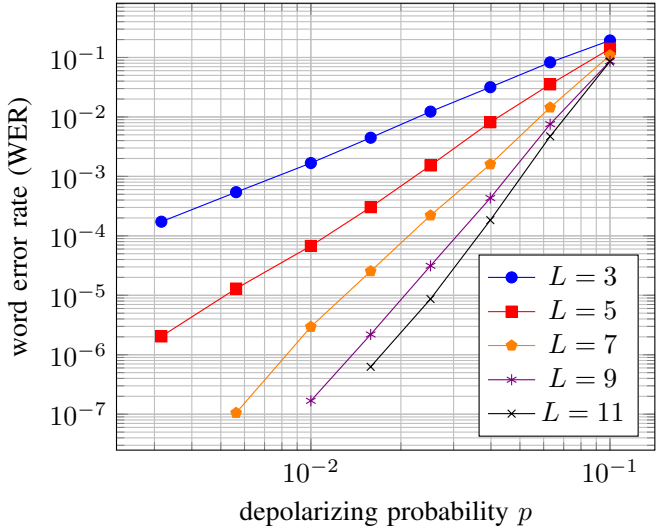
\begin{figure}[t]
  \begin{center}
    \resizebox{1\columnwidth}{!}{\begin{tikzpicture}
\begin{loglogaxis}[
    xlabel={depolarizing probability $p$},
    ylabel={word error rate (WER)},
    legend pos=south east,
    grid=both
]

\addplot[color=blue,mark=*] coordinates { 
	(0.100000000000000,     0.193236714975845) 		
	(0.0630957344480193,   0.0827814569536424)	 	
	(0.0398107170553497,   0.0316505776230416) 	    	
	(0.0251188643150958,   0.0122865216857108)	 	
	(0.0158489319246111,   0.00446937361728754) 		
	(0.0100000000000000,   0.001676628635140460) 	
	(0.0056250000000000,   0.000542264060907099) 	
	(0.0031640625000000,   0.000172855319233525) 	
    };

\addplot[color=red,mark=square*] coordinates { 
	(0.100000000000000,    0.140449438202247) 	    	
	(0.0630957344480193,  0.0353107344632768)	 	
	(0.0398107170553497,  0.00813504169208867) 	     
	(0.0251188643150958,  0.00152843266871986)	     
	(0.0158489319246111,  0.000303367224508431) 	
	(0.0100000000000000,  6.75632079146244e-05) 	
	(0.0056250000000000,  1.29675439880164e-05) 	
	(0.0031640625000000,  2.03315727923810e-06) 	
    };

\addplot[color=orange,mark=pentagon*] coordinates { 
	(0.100000000000000,    0.108932461873638) 	     
	(0.0630957344480193,  0.0144237703735757)	 	
	(0.0398107170553497,  0.00159740581295975)  		
	(0.0251188643150958,  0.000220425995278475)	    
	(0.0158489319246111,  2.53927947944771e-05) 	
	(0.0100000000000000,  2.94948586464412e-06) 	
	(0.0056250000000000,  1.05228556160910e-07) 	
    };
   
\addplot[color=violet,mark=asterisk] coordinates { 
	(0.100000000000000,   0.0865051903114187) 	    	
	(0.0630957344480193, 0.00756572725553244)	 	
	(0.0398107170553497,  0.000432371666684682) 	
	(0.0251188643150958,  3.13327046512335e-05)	 	
	(0.0158489319246111,  2.19520142688093e-06) 	
	(0.0100000000000000,  1.67814196122054e-07) 	
    };
    
\addplot[color=black,mark=x] coordinates { 
	(0.100000000000000,   0.0856164383561644) 	    	
	(0.0630957344480193,  0.00472087808332350)	 	
	(0.0398107170553497,  0.000183683519629593) 	
	(0.0251188643150958,  8.66241337586624e-06)	 	
	(0.0158489319246111,  6.26973791868526e-07) 	
    };
    
\legend{$L = 3$, $L = 5$, $L = 7$, $L = 9$, $L = 11$}
 
\end{loglogaxis}
\end{tikzpicture}}
  \end{center}
  \caption{Simulation results of SPA+LPPCWD for toric codes.}
  \label{fig:pSPA}
\end{figure}


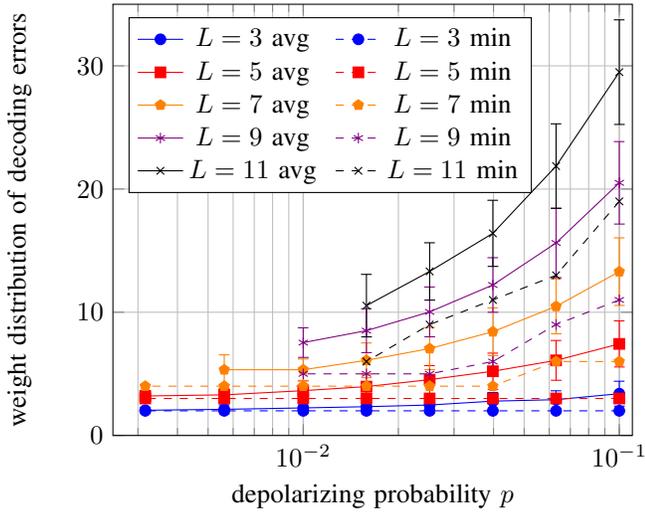
\begin{figure}[t]
  \begin{center}
    \resizebox{1\columnwidth}{!}{\begin{tikzpicture}
\begin{semilogxaxis}[
    xlabel={depolarizing probability $p$},
    mark options={solid},
    xmin=0.0025, xmax=0.11,
    ymin=0, ymax=35,
    ylabel={weight distribution of decoding errors},
    grid=both,
    legend columns=2,
    legend style={
    legend pos=north west,
    /tikz/column 2/.style={
                column sep=5pt,},
	}
]

\addplot[color=blue,mark=*,error bars/.cd, y dir=both,y explicit] coordinates { 
	(0.100000000000000,     3.39000000000000) +- (1.01144207712399,1.01144207712399)		
	(0.0630957344480193,   2.90500000000000)	+- (0.720116545564660,0.720116545564660) 	
	(0.0398107170553497,   2.78500000000000) 	+- (0.700878845292748,0.700878845292748)    	
	(0.0251188643150958,   2.47000000000000)	+- (0.641214300288502,0.641214300288502) 	
	(0.0158489319246111,   2.33500000000000) 	+- (0.586939083242193,0.586939083242193)	
	(0.0100000000000000,   2.23500000000000) 	+- (0.459161350429937,0.459161350429937)	
	(0.0056250000000000,   2.12000000000000) 	+- (0.355290241603811,0.355290241603811)	
	(0.0031640625000000,   2.04500000000000) 	+- (0.230740560002247,0.230740560002247)	
    };
\addplot[dashed,color=blue,mark=*] coordinates { 
	(0.100000000000000,     2)	
	(0.0630957344480193,   2) 	
	(0.0398107170553497,   2) 	
	(0.0251188643150958,   2) 	
	(0.0158489319246111,   2)	
	(0.0100000000000000,   2)	
	(0.0056250000000000,   2)	
	(0.0031640625000000,   2)	
    };
    
\addplot[color=red,mark=square*,error bars/.cd, y dir=both,y explicit] coordinates { 
	(0.100000000000000,    7.43000000000000) +- (1.87152693417077,1.87152693417077)	    	
	(0.0630957344480193,  6.09500000000000) +- (1.61524614238899,1.61524614238899)	 	
	(0.0398107170553497,  5.21000000000000) +- (1.47198290354939,1.47198290354939)	     
	(0.0251188643150958,  4.53000000000000) +- (1.13823454939974,1.13823454939974)        
	(0.0158489319246111,  3.96500000000000) +- (1.02421682299331,1.02421682299331)		
	(0.0100000000000000,  3.63500000000000) +- (0.790505849116525,0.790505849116525)		
	(0.0056250000000000,  3.29500000000000) +- (0.537979656262914,0.537979656262914)		
	(0.0031640625000000,  3.20300751879699) +- (0.439686501544008,0.439686501544008)		
    };
\addplot[dashed,color=red,mark=square*] coordinates { 
	(0.100000000000000,     3)	
	(0.0630957344480193,   3) 	
	(0.0398107170553497,   3) 	
	(0.0251188643150958,   3) 	
	(0.0158489319246111,   3)	
	(0.0100000000000000,   3)	
	(0.0056250000000000,   3)	
	(0.0031640625000000,   3)	
    };
    
\addplot[color=orange,mark=pentagon*,error bars/.cd, y dir=both,y explicit] coordinates { 
	(0.100000000000000,    13.2950000000000) +- (2.74328325461943,2.74328325461943)	     
	(0.0630957344480193,  10.4800000000000) +- (2.22809831707507,2.22809831707507) 		
	(0.0398107170553497,  8.42000000000000) +- (1.93710139358920,1.93710139358920) 		
	(0.0251188643150958,  7.05000000000000) +- (1.71235533812936,1.71235533812936)    		
	(0.0158489319246111,  6.09722222222222) +- (1.41084433602072,1.41084433602072)		
	(0.0100000000000000,  5.33333333333333) +- (0.868114732282431,0.868114732282431)		
	(0.0056250000000000,  5.33333333333333) +- (1.21106014163900,1.21106014163900)		
    };
\addplot[dashed,color=orange,mark=pentagon*] coordinates { 
	(0.100000000000000,     6)	
	(0.0630957344480193,   6) 	
	(0.0398107170553497,   4) 	
	(0.0251188643150958,   4) 	
	(0.0158489319246111,   4)	
	(0.0100000000000000,   4)	
	(0.0056250000000000,   4)	
	(0.0031640625000000,   4)	
    };

\addplot[color=violet,mark=asterisk,error bars/.cd, y dir=both,y explicit] coordinates { 
	(0.100000000000000,   20.5050000000000)  +- (3.34753715813964, 3.34753715813964)	    	
	(0.0630957344480193, 15.6250000000000)  +- (2.81852718230407, 2.81852718230407)	 	
	(0.0398107170553497,  12.2150000000000) +- (2.21433770172628, 2.21433770172628)		
	(0.0251188643150958,  10.0291262135922) +- (2.02172664838170, 2.02172664838170)	 	
	(0.0158489319246111,  8.50617283950617) +- (1.78271882801329, 1.78271882801329)		
	(0.0100000000000000,  7.53571428571429) +- (1.20130000129930, 1.20130000129930)		
    };
\addplot[dashed,color=violet,mark=asterisk] coordinates { 
	(0.100000000000000,     11)	
	(0.0630957344480193,   9) 	     
	(0.0398107170553497,   6) 		
	(0.0251188643150958,   5) 		
	(0.0158489319246111,   5)		
	(0.0100000000000000,   5)		
    };    
       
\addplot[color=black,mark=x,error bars/.cd, y dir=both,y explicit] coordinates { 
	(0.100000000000000,   29.4900000000000)  +- (4.24499705535822, 4.24499705535822)	    	
	(0.0630957344480193,  21.8650000000000) +- (3.43010247880080, 3.43010247880080) 		
	(0.0398107170553497,  16.4051724137931) +- (2.68321172988758, 2.68321172988758)	    
	(0.0251188643150958,  13.3132530120482) +- (2.32655775689748, 2.32655775689748) 		
	(0.0158489319246111,  10.5384615384615) +- (2.53690706094990, 2.53690706094990)		
    };
    
\addplot[dashed,color=black,mark=x] coordinates { 
	(0.100000000000000,     19)	
	(0.0630957344480193,   13) 	
	(0.0398107170553497,   11) 	
	(0.0251188643150958,   9) 		
	(0.0158489319246111,   6)		
    };
\legend{$L = 3$ avg, $L = 3$ min, $L = 5$ avg, $L = 5$ min, $L = 7$ avg, $L = 7$ min, $L = 9$ avg, $L = 9$ min, $L = 11$ avg,  $L = 11$ min
}
 
\end{semilogxaxis}
\end{tikzpicture}}
  \end{center}
  \caption{Weight distribution of decoding errors of SPA+PCWD for toric codes (solid: average, dashed: minimum).}
  \label{fig:pSPA_Wt}
\end{figure}


\begin{figure}[t]
  \begin{center}
    \resizebox{1\columnwidth}{!}{\begin{tikzpicture}
\begin{loglogaxis}[
    xlabel={depolarizing probability $p$},
    ylabel={word error rate (WER)},
    legend pos=south east,
    grid=both
]

\addplot[color=blue,mark=*] coordinates { 
	(0.100000000000000,     1) 		
	(0.0630957344480193,   0.876971608832808)	 	
	(0.0398107170553497,   0.370260955194486) 	    	
	(0.0251188643150958,   0.0882872125019470)	 	
	(0.0158489319246111,   0.0181824207874851) 		
	(0.0100000000000000,   0.00378668941979522) 		
	(0.0056250000000000,   0.000500835840855901) 	
    };

\addplot[color=red,mark=square*] coordinates { 
	(0.100000000000000,    0.985221674876847) 	    	
	(0.0630957344480193,  0.630914826498423)	 		
	(0.0398107170553497,  0.0984736582964057) 	     
	(0.0251188643150958,  0.00519183842998806)	     
	(0.0158489319246111,  0.000213442927217288) 	
	(0.0100000000000000,  1.87713310580205e-05) 	
	(0.0056250000000000,  2.67469073888332e-06) 	
    };
    
\legend{SPA, IMR-SPA}
 
\end{loglogaxis}
\end{tikzpicture}}
  \end{center}
  \caption{Simulation results for SPA and IMR-SPA decoding 
  of a $\llbracket 256,32 \rrbracket$ MacKay's bicycle code.}
  \label{fig:rSPA}
\end{figure}
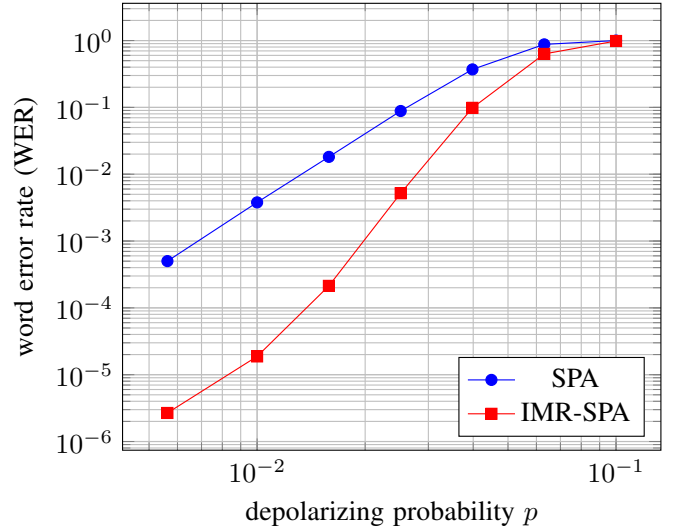


\appendices
\section{Proof of Theorem~\ref{thm:ND_decoder}}
\label{app:ND_decoder_proof}

Proof sketch: 
The idea is
  to show each coset of $\cN$ contains at most one vector of weight less than
  or equal to $t_{\cN}$ and there exists some coset of $\cN$ containing two
  vectors of weights less than or equal to $t_{\cN}+1$, one of which is a
  decoding error. 
  
Let the set of decoding errors and the minimum weight of decoding errors for $\cD_{\Rm{A}}^{\Rm{B}}$ be 
$\cE_{\Rm{A}}^{\Rm{B}} \defeq \GF{2}^{2n}\backslash \left(\cup_{\vs}\cD_{\Rm{A}}^{\Rm{B}}(\vs)\right)$ and 
$d_{\Rm{A}}^{\Rm{B}} \defeq \min_{\vv \in \cE_{\Rm{A}}^{\Rm{B}}} \Wt(\vv)$.
We show that $d_{\Rm{ND}}^{\Rm{ML}} \geq t_{\cN}+1$ and $d_{\Rm{ND}}^{\Rm{ML}}
\leq t_{\cN}+1$.

\begin{itemize}

\item ($d_{\Rm{ND}}^{\Rm{ML}} \geq t_{\cN}+1$) For any syndrome
  $\vs\in\GF{2}^{n-k}$, there is at most one $\vv\in\vt(\vs)+\cN$ such that
  $\Wt(\vv) \leq t_{\cN}$, otherwise suppose there are $\vv_1,\
  \vv_2\in\vt(\vs)+\cN$ such that $\Wt(\vv_1),\ \Wt(\vv_2) \leq t_{\cN}$ and
  then $\vv_1+\vv_2\in\cN$ with $\Wt(\vv_1+\vv_2)\leq 2t_{\cN}<d_{\cN}$. A
  contradiction arises.

  Hence for all the vectors with weight no more than $t_{\cN}$, they must have
  distinct syndromes and they are all in
  $\cup_{\vs}\cD_{\Rm{ND}}^{\Rm{ML}}(\vs)$ and not in
  $\cE_{\Rm{ND}}^{\Rm{ML}}$, which implies $d_{\Rm{ND}}^{\Rm{ML}} \geq
  t_{\cN}+1$.

\item ($d_{\Rm{ND}}^{\Rm{ML}} \leq t_{\cN}+1$) Since $d_{\cN} =
  \min_{\vv\in\cN} \Wt(\vv)$, there exists $\vv\in\cN$ such that
  $\Wt(\vv)=d_{\cN}$.  There exist $\vv_1,\ \vv_2\in\GF{2}^{2n}$ such that
  $\vv_1 + \vv_2 =\vv$, $\Wt(\vv_2)=t_{\cN}+1$, and $\Wt(\vv_1) = d_{\cN} -
  \Wt(\vv_2) \leq d_{\cN} - (t_{\cN}+1) \leq t_{\cN}+1$.  Then $\vv_1,\
  \vv_2\in \vt(\vs)+\cN$ for some $\vs$ and at most one of them can be in
  $\cD_{\Rm{ND}}^{\Rm{ML}}(\vs)$.  Hence $\vv_i\in\cE_{\Rm{ND}}^{\Rm{ML}}$ for
  some $i$ and $d_{\Rm{ND}}^{\Rm{ML}} \leq \Wt(\vv_i) \leq t_{\cN}+1$.
\end{itemize}


\section{Proof of Theorem~\ref{thm:DD_decoder}}
\label{app:DD_decoder_proof}
Let the set of decoding errors and the minimum weight of decoding errors for $\cD_{\Rm{A}}^{\Rm{B}}$ be 
$\cE_{\Rm{A}}^{\Rm{B}} \defeq \GF{2}^{2n}\backslash \left(\cup_{\vs}\cD_{\Rm{A}}^{\Rm{B}}(\vs)\right)$ and 
$d_{\Rm{A}}^{\Rm{B}} \defeq \min_{\vv \in \cE_{\Rm{A}}^{\Rm{B}}} \Wt(\vv)$.
We first show that $d_{\Rm{D}}^{\Rm{ML}*},d_{\Rm{D}}^{\Rm{ML}} \leq t+1$ and then
$d_{\Rm{D}}^{\Rm{ML}*} \geq t+1$.
\begin{itemize}

\item ($d_{\Rm{D}}^{\Rm{ML}*},d_{\Rm{D}}^{\Rm{ML}} \leq t+1$) Since $d \defeq
  \min_{\vv\in\cN\backslash\cB} \Wt(\vv)$, there exists
  $\vv\in\cN\backslash\cB$ such that $\Wt(\vv)=d$.  There exist $\vv_1,\
  \vv_2\in\GF{2}^{2n}$ such that $\vv_1 + \vv_2 =\vv$, $\Wt(\vv_2)=t+1$, and
  $\Wt(\vv_1) = d - \Wt(\vv_2) \leq d_{\cN} - (t+1) \leq t+1$.  Then we have
  $\vv_1\in\vt(\vs)+\boldV{\ell}_1+\cB$ and
  $\vv_2\in\vt(\vs)+\boldV{\ell}_2+\cB$, for some $\vs$ and
  $\boldV{\ell}_1\neq\boldV{\ell}_2 \in\cN$.  Then at most one of them can be
  in $\cD_{\Rm{D}}^{\Rm{ML}*}(\vs)$ or $\cD_{\Rm{D}}^{\Rm{ML}}(\vs)$.  
  Hence $\vv_{i_1}\in\cE_{\Rm{D}}^{\Rm{ML}*}$ and $\vv_{i_2}\in\cE_{\Rm{D}}^{\Rm{ML}}$
  for some $i_1$, $i_2$ and $d_{\Rm{D}}^{\Rm{ML}*}\leq \Wt(\vv_{i_1}) \leq t_{\cN}+1$ and $d_{\Rm{D}}^{\Rm{ML}} \leq \Wt(\vv_{i_2}) \leq t_{\cN}+1$.

\item ($d_{\Rm{D}}^{\Rm{ML}*} \geq t+1$) For any syndrome
  $\vs\in\GF{2}^{n-k}$, there is at most one coset of $\cB$ in $\vt(\vs)+\cN$
  containing vectors of weights smaller or equal to $t$, otherwise suppose
  there are $\vv_1,\ \vv_2\in \vt(\vs)+\cN$ with $\Wt(\vv_1),\ \Wt(\vv_2) \leq
  t$ such that $\vv_1\in\vt(\vs)+\boldV{\ell}_1+\cB$ and
  $\vv_2\in\vt(\vs)+\boldV{\ell}_2+\cB$ for
  $\boldV{\ell}_1\neq\boldV{\ell}_2\in\cN$, and then
  $\vv_1+\vv_2\in\cN\backslash \cB$ with $\Wt(\vv_1+\vv_2)\leq 2t<d$. A
  contradiction arises.  Hence all vectors of weight no larger than $t$ are in
  $\cup_{\vs}\cD_{\Rm{D}}^{\Rm{ML}*}(\vs)$ and not in
  $\cE_{\Rm{D}}^{\Rm{ML}}$, which implies $d_{\Rm{D}}^{\Rm{ML}*} \geq t+1$.
\end{itemize}
The main contribution of the probabilities of cosets comes from the vectors with minimum weights as $p \rightarrow 0$.
Hence $ d_{\Rm{D}}^{\Rm{ML}} \geq d_{\Rm{D}}^{\Rm{ML}*} \geq t+1$ which implies $d_{\Rm{D}}^{\Rm{ML}}= t+1$.

\bibliographystyle{IEEEtran}
\bibliography{ref}

\end{document}